\renewcommand{\P}{\mathsf{P}} 			
\newtheorem{proposition}{Proposition}
\newtheorem{Cor}{Corollary}
\newtheorem{definition}{Definition}
\newtheorem{remark}{Remark}
\newtheorem{fact}{Fact}
\newcommand{\signal}[1]{{\boldsymbol{#1}}}
\newcommand{\real}{{\mathbb R}}
\newcommand{\Natural}{{\mathbb N}}
\newcommand{\refeq}[1]{(\ref{#1})}
\def\BibTeX{{\rm B\kern-.05em{\sc i\kern-.025em b}\kern-.08em
		T\kern-.1667em\lower.7ex\hbox{E}\kern-.125emX}}
\begin{document}
	
		\title{Beyond the Use-and-then-Forget (UatF) Bound: Fixed Point Algorithms for Statistical Max-Min Power Control \\ \thanks{The authors acknowledge the financial support by the Federal Ministry for Research, Technology and Space (BMFTR) in Germany in the programme of ``Souver\"an. Digital. Vernetzt.,'' joint project 6G-RIC, project identification numbers: 16KISK020K and 16KISK030, and by the 6G-MIRAI project, which has received funding from the Smart Networks and Services Joint Undertaking (SNS JU) under the European Union's Horizon Europe research and innovation program under Grant Agreement No 10119236. Views and opinions expressed are however those of the author(s) only, and they do not necessarily reflect those of the European Union or 	the SNS JU (granting authority). Neither the European Union nor the granting authority can be held responsible for them.}}

\author{%
	\IEEEauthorblockN{%
		Renato~L.~G.~Cavalcante\IEEEauthorrefmark{1},
		Noor~Ul~Ain\IEEEauthorrefmark{2},
		Lorenzo Miretti\IEEEauthorrefmark{3},
		Slawomir Sta\'nczak\IEEEauthorrefmark{1}\IEEEauthorrefmark{2}
	}\\
	\IEEEauthorblockA{\IEEEauthorrefmark{1}\textit{Wireless Communications and Networks}, Fraunhofer Heinrich Hertz Institute, Berlin, Germany}%
	\IEEEauthorblockA{\IEEEauthorrefmark{2}\textit{Network Information Theory}, Technical University of Berlin, Berlin, Germany}%
	\IEEEauthorblockA{\IEEEauthorrefmark{3}\textit{Ericsson Research}, Germany }%

}
	
	\maketitle

	\begin{abstract}
		We introduce mathematical tools and fixed point algorithms for optimal statistical max-min power control in cellular and cell-less massive MIMO systems. Unlike previous studies that rely on the use-and-then-forget (UatF) lower bound on Shannon achievable (ergodic) rates, our proposed framework can deal with alternative bounds that explicitly consider perfect or imperfect channel state information (CSI) at the decoder. In doing so, we address limitations of UatF-based power control algorithms, which inherit the shortcomings of the UatF bound. For example, the UatF bound can be overly conservative: in extreme cases, under fully statistical (nonadaptive) beamforming in zero-mean channels, the UatF bound produces trivial (zero) rate bounds. It also lacks scale invariance: merely scaling the beamformers can change the bound drastically. In contrast, our framework is compatible with  information-theoretic bounds that do not suffer from the above drawbacks. We illustrate the framework by solving a max-min power control problem considering a standard bound that exploits instantaneous CSI at the decoder.
		
	\end{abstract}
	 
	\begin{IEEEkeywords}
		Cell-less/massive MIMO, interference management, power control
	\end{IEEEkeywords}

	\section{Introduction}
	\label{sect.intro}
	The use-and-then-forget (UatF) bound has been fundamental in the development of beamforming and power control algorithms in massive MIMO and cell-less networks because it often leads to tractable and scalable optimization problems \cite{miretti2025two,demir2021,massivemimobook,miretti2024ul,cavalcante2023,marzetta16,ain2025optimal}. Furthermore, in some scenarios of practical relevance -- e.g., when optimal beamformers are employed -- the UatF bound provides spectral efficiency predictions that closely match those obtained from tighter bounds of the Shannon ergodic rates, including the coherent decoding bound \cite{miretti2025two,ain2025optimal}.
	
Despite its broad success, the UatF bound has important limitations that restrict its general applicability. For example, it replaces the instantaneous effective channel by its mean and treats deviations as uncorrelated noise \cite[Ch.~4.2]{massivemimobook}, so it may lead to trivial bounds in scenarios where the beamformer remains fixed across multiple samples of a zero-mean channel. More specifically, under such condition, the numerator of the effective signal-to-interference-plus-noise ratio (SINR) expression in \cite[Theorem~4.4]{massivemimobook} vanishes, which in turn yields a zero lower bound on the achievable rate. Furthermore, even if the UatF bound is nontrivial, it can severely underestimate the achievable rates when simple beamformers are employed, and its predictions can change even with a simple scaling of the samples of the beamforming vectors, an operation that should not affect the spectral efficiency if the decoder is aware of the scaling factor \cite[Ch.~4.2.1]{massivemimobook}. For these reasons, there is a need to develop techniques that enable the derivation of power control algorithms based on alternative bounds. From an optimization standpoint, using known tight bounds is challenging because they often require computing the expectation of a logarithmic function involving  fractions containing random variables (beamformers, channel estimates, etc.) in both the numerator and denominator \cite[Theorem~5.1]{massivemimobook}.

	Against this background, we demonstrate that, in networks where the beamforming strategy is independent of the transmit power vector (the general case will be considered elsewhere), alternatives to the UatF bound -- e.g., the coherent decoding bound and the upper bound in \cite[Proposition~1]{miretti2025joint} -- can be employed in tractable weighted max-min optimization problems, provided we accept that expectations are approximated via Monte Carlo sampling, as also done in power control problems based on the UatF bound \cite{demir2021,miretti2025joint,ain2025optimal,miretti2022closed}. Nevertheless, we avoid additional approximations, such as replacing the expectation of a ratio with the ratio of expectations.
	
	 To further contextualize our contribution, we note that many existing power control algorithms largely fall into two classes: (i) \emph{per-sample schemes}, which solve an optimization problem for each channel sample \cite{yates95,nuzman07}, but suffer from poor scalability in distributed MIMO systems; and (ii) \emph{statistical-level schemes}, which solve a single optimization problem per channel distribution \cite{miretti2025two,demir2021}, but rely on the UatF bound and thus inherit its limitations. Our framework builds upon the foundations of both classes to address these drawbacks. In particular, we use this framework to derive a fixed point algorithm -- built on the iterations introduced to the wireless literature in \cite{nuzman07} -- that provably converges to globally optimal solutions of (weighted) max-min problems using an achievable rate bound that exploits instantaneous channel state information (CSI) at the decoder. 
	
	\section{Mathematical preliminaries}
	\label{sect.preliminaries}
	 
In this section, we clarify notation, review relevant results from the literature, and establish the mathematical framework that will serve as the foundation for deriving power control schemes in the next section.

	\subsection{Notation and definitions}
	We use the convention that the set $\Natural$ of natural numbers does not include zero. The sets of nonnegative and positive reals are denoted by, respectively, $\real_{+}:=~[0,\infty[$ and $\real_{++}:=~]0,\infty[$. The elements of a vector $\signal{x} \in \real^N$ are denoted either as $(x_1, \ldots, x_N)$ or as $[x_1, \ldots, x_N]^t$, where $(\cdot)^t$ denotes the transpose operator. We adopt the latter notation whenever we need to emphasize that vectors should be interpreted as column vectors, ensuring that subsequent equations follow conventions commonly used in the wireless literature. Inequalities involving vectors are understood coordinate-wise. A norm $\|\cdot\|$ on $\real^N$ is \emph{monotone} if $(\forall\signal{x}\in\real_+^N)(\forall\signal{y}\in\real_+^N)~\signal{x}\le\signal{y}\Rightarrow \|\signal{x}\|\le\|\signal{y}\|$. The standard $l_1$, $l_2$, and $l_\infty$ norms (all of which are monotone on $\real_+^N$) are denoted by $\|\cdot\|_1$, $\|\cdot\|_2$, and $\|\cdot\|_\infty$, respectively. A sequence $(\signal{x}_n)_{n\in\Natural}$ of vectors in $\real^N$ is said to converge to $\signal{x}^\star$ if $\lim_{n\to\infty}\|\signal{x}_n-\signal{x}^\star\|=0$ for some norm $\|\cdot\|$ (and, hence, for every norm on $\real^N$ because of the equivalence of norms in finite-dimensional spaces). To avoid potential misunderstandings, we say that $f:\mathcal{X}\to\mathcal{Y}$, where $\mathcal{X}\subset \real^N$ and $\mathcal{Y}\subset \real^M$, is a \emph{function} only if $M=1$ or a \emph{mapping} if $M\ge 1$ (NOTE: for $M=1$, then $f$ is both a function and a mapping). The set of fixed points of a mapping $f:\mathcal{X}\to\mathcal{Y}$, where $\mathcal{Y}\subset\mathcal{X}$, is denoted by $\mathrm{Fix}(f):=\{\signal{x}\in\mathcal{X}\mid f(\signal{x})=\signal{x}\}$. A mapping $f:\real^N\to\real^M$ is concave (respectively, convex) if each coordinate function is concave (respectively, convex). 
			
	The following technical result can be established with arguments analogous to those in \cite[Lemmas~2 and 3]{renato2016power}\cite{cai2012optimal}, in combination with the standard result shown in  \cite[Theorem 10.3]{rock70}\cite[Proposition~1]{cavalcante2016}.  
	
	\begin{fact}
		\label{fact.concavity}
	    Let $(c_1,\ldots,c_N, \sigma)\in\real_{+}^{N+1}$ be a nonnegative vector such that $c_1>0$ and $\sigma>0$. Then the function 
	    \begin{align*} 
	    	f:\real^N_{++}\to\real_{++}:\signal{x}\mapsto \dfrac{x_1}{\log(1+s(\signal{x}))},
	    \end{align*} 
	    where $s:\real_+^N\to\real_+:\signal{x}\mapsto{c_1x_1}/{(\sum_{n=1}^N c_n~x_n+ \sigma)}$, is concave. Furthermore, $f$ can be continuously extended to the domain $\real_+^N$, and this continuous extension, denoted by $\bar{f}:\real_+^N\to\real_+$, is a concave function satisfying $(\forall\signal{x}\in\real_+^N)~\bar{f}(\signal{x})\ge \bar{f}(\signal{0})=\sigma/c_1>0$.  
	\end{fact}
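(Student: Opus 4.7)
The plan is to rewrite $f$ as a scaled logarithmic mean of two positive affine functions of $\signal{x}$, and then to read off every conclusion from standard properties of that mean.

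First, setting $a(\signal{x}):=\sum_{n=1}^N c_n x_n + \sigma$, one has $1+s(\signal{x})=(a(\signal{x})+c_1 x_1)/a(\signal{x})$, so that
\[
f(\signal{x}) = \frac{x_1}{\log(a(\signal{x})+c_1 x_1)-\log a(\signal{x})} = \frac{1}{c_1}\,L\bigl(a(\signal{x}),\,a(\signal{x})+c_1 x_1\bigr),
\]
where $L(u,v):=(v-u)/(\log v - \log u)$, extended continuously by $L(u,u):=u$, is the classical logarithmic mean on $\real_{++}^2$. Both inner arguments are affine in $\signal{x}$ and bounded below by $\sigma>0$ on the closed orthant $\real_+^N$, so every claim about $f$ reduces to a corresponding property of $L$.

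Next, I would prove concavity of $L$ on $\real_{++}^2$ via the integral representation
\[
L(u,v) = \int_0^1 u^{1-t} v^{t}\,dt,
\]
obtained by direct integration of $(v/u)^t$. Each integrand is a Cobb--Douglas function with exponents summing to one, hence concave on $\real_{++}^2$, and concavity is preserved under nonnegative integration, so $L$ is concave. Composition with the affine maps $\signal{x}\mapsto a(\signal{x})$ and $\signal{x}\mapsto a(\signal{x})+c_1 x_1$, whose ranges lie in $[\sigma,\infty)\subset\real_{++}$, then delivers concavity of $f$ on $\real_{++}^N$. The same closed-form expression is continuous on all of $\real_+^N$, which furnishes the continuous extension $\bar f$; concavity propagates to the closure exactly as in \cite[Theorem~10.3]{rock70}\cite[Proposition~1]{cavalcante2016}.

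Finally, evaluation at $\signal{x}=\signal{0}$ gives $\bar f(\signal{0}) = L(\sigma,\sigma)/c_1 = \sigma/c_1$, and the remaining lower bound reduces to the one-line fact that $L(u,v)\geq u$ whenever $v\geq u>0$: applying $\log(1+y)\leq y$ at $y=(v-u)/u$ yields $\log(v/u)\leq (v-u)/u$, which rearranges to $L(u,v)\geq u$. In our setting $v(\signal{x})-u(\signal{x})=c_1 x_1\geq 0$, so $\bar f(\signal{x})\geq a(\signal{x})/c_1\geq \sigma/c_1$.

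The only nontrivial ingredient is the concavity of $L$; the integral representation sidesteps any Hessian computation and reduces it to the textbook concavity of Cobb--Douglas functions. One could equivalently invoke \cite[Lemmas~2 and~3]{renato2016power} or \cite{cai2012optimal} applied directly to $f$.
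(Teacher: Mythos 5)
Your proof is correct. Note that the paper does not actually write out a proof of Fact~\ref{fact.concavity}: it defers to the arguments of \cite[Lemmas~2 and 3]{renato2016power}, \cite{cai2012optimal}, and \cite[Theorem~10.3]{rock70}\cite[Proposition~1]{cavalcante2016}, which follow the classical route of establishing concavity of the scalar function $z\mapsto 1/\log(1+1/z)$ (by a derivative computation) and then writing $f$ as the perspective of that function composed with the affine map $\signal{x}\mapsto (\sum_n c_n x_n+\sigma)/c_1$, with Rockafellar's closure theorem handling the extension to the boundary. Your route is genuinely different and, in my view, tidier: the identity $f(\signal{x})=\tfrac{1}{c_1}L\bigl(a(\signal{x}),a(\signal{x})+c_1x_1\bigr)$ with $L(u,v)=\int_0^1 u^{1-t}v^t\,\mathrm{d}t$ reduces concavity to the textbook concavity of Cobb--Douglas integrands, and --- since the inner affine maps send all of $\real_+^N$ into $[\sigma,\infty)^2\subset\real_{++}^2$ where $L$ is continuous and concave --- the same closed form simultaneously delivers the continuous extension, its concavity on $\real_+^N$, and (via $\log(1+y)\le y$, i.e.\ $L(u,v)\ge u$) the lower bound $\bar f(\signal{x})\ge a(\signal{x})/c_1\ge\sigma/c_1=\bar f(\signal{0})$. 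In particular, your closing appeal to \cite[Theorem~10.3]{rock70} for propagating concavity to the closure is superfluous in your construction: concavity on $\real_+^N$ already follows from composing the concave $L$ with affine maps on that convex set. What the paper's cited route buys is reuse of established lemmas; what yours buys is a self-contained argument that avoids both the Hessian check on $1/\log(1+1/z)$ and the separate boundary/closure step.
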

	
	A probability space is denoted by $(\Omega, \mathcal{F}, \P)$, where $\Omega$ is the sample space, $\mathcal{F}$ is the event space, and $\P$ is the probability measure. The next result is a direct implication of \cite[Proposition~2.16]{folland1999real}.
	
	\begin{fact}
		\label{fact.positivity}
		Given a probability space $(\Omega, \mathcal{F}, \P)$ and a scalar $\delta\ge 0$, assume that a random variable $X:\Omega\to [0,\infty[$ satisfies $X>\delta$ ($\P$-)almost surely (a.s.) (or, equivalently, $X(\omega)>\delta$ for almost every $\omega\in\Omega$) and $E[X(\omega)]:=\int X(\omega) \P(\mathrm{d}\omega)$ is well-defined (i.e., $E[X(\omega)]\in\real_+$).\footnote{We adopt the stochastic-programming notation $E[X(\omega)]$ rather than $E[X]$ to avoid ambiguity when deterministic decision variables are present. This distinction is especially important later in \refeq{eq.utility} and \refeq{eq.su}, where the notation makes explicit that the power vector is not a decision variable for the beamformers, which are random vectors. } Then $E[X(\omega)]>\delta$.
	\end{fact}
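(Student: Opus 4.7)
The plan is to reduce the statement to the classical measure-theoretic fact that a nonnegative integrable random variable with zero expectation vanishes almost surely, which is precisely the content of \cite[Proposition~2.16]{folland1999real}.

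First, I would translate by $\delta$ and introduce the auxiliary random variable $Y:\Omega\to\real:\omega\mapsto X(\omega)-\delta$. By hypothesis, $Y>\delta-\delta=0$ a.s., and in particular $Y\ge 0$ a.s. Since $E[X(\omega)]$ is assumed to exist in $\real_+$ and $\delta\ge 0$ is a deterministic constant, $E[Y(\omega)]=E[X(\omega)]-\delta$ is well defined and finite, and the monotonicity of the Lebesgue integral applied to the a.s.-nonnegative $Y$ immediately yields $E[Y(\omega)]\ge 0$, i.e., $E[X(\omega)]\ge \delta$. The remaining task is to upgrade this weak inequality to a strict one.

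For strictness, I would argue by contradiction: suppose $E[Y(\omega)]=0$. Applying \cite[Proposition~2.16]{folland1999real} to the nonnegative integrable random variable $Y$, we obtain $Y=0$ a.s., which contradicts $Y>0$ a.s., since the events $\{\omega\in\Omega\mid Y(\omega)>0\}$ and $\{\omega\in\Omega\mid Y(\omega)=0\}$ are disjoint and cannot both carry full probability. Hence $E[Y(\omega)]>0$, equivalently $E[X(\omega)]>\delta$.

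As a backup (and because the argument is instructive for readers less familiar with the cited proposition), one can also give a direct proof using the increasing sequence $A_n:=\{\omega\in\Omega\mid X(\omega)>\delta+1/n\}$, $n\in\Natural$. Since $\bigcup_{n\in\Natural}A_n\supseteq\{\omega\in\Omega\mid X(\omega)>\delta\}$ has probability one, continuity of $\P$ from below gives $\P(A_n)\to 1$, and in particular $\P(A_N)>0$ for some $N\in\Natural$. Splitting the integral as $E[X(\omega)]\ge (\delta+1/N)\P(A_N)+\delta\,\P(\Omega\setminus A_N)=\delta+\P(A_N)/N>\delta$ then finishes the argument. The only mild obstacle in either route is careful bookkeeping between almost-sure and pointwise statements, which is resolved by the standard null-set manipulations of measure theory and does not constitute a substantive difficulty.
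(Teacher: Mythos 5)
Your proof is correct and follows essentially the same route as the paper, which simply declares the fact to be ``a direct implication of \cite[Proposition~2.16]{folland1999real}'': your translation by $\delta$ followed by the contradiction argument via that proposition is exactly the intended reduction, and your elementary backup via the sets $A_n$ is a valid (if unnecessary) alternative.
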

	We can now proceed with the contributions of this section.

	\subsection{Framework for Power Control with Tight Ergodic Rate Bounds}
The remainder of this section has two main objectives. First, we demonstrate that the results in \cite{nuzman07} continue to hold for a slight variation of the concept of standard interference mappings originally introduced in the wireless literature in \cite{yates95} (see Proposition~\ref{prop.nuzman}). Second, we establish a connection between these results and a general class of functions that includes, as special cases, those commonly arising in weighted max-min optimization problems involving tight bounds on Shannon (ergodic) rates (Proposition~\ref{proposition.MSP}). This latter result lays the groundwork for the subsequent sections, which explore further connections to concrete problems in the wireless domain. We begin by introducing the following variant of standard interference mappings:
	\begin{definition}
		\label{def.sif}
		We say that $f:\real^N_{++}\to\real_{++}$ is a monotonic, scalable, and positive (MSP) function if the following properties hold:
		\begin{itemize}
			\item[(i)] [monotonicity] $(\forall \signal{x}\in\real_{++}^N)(\forall \signal{y}\in\real_{++}^N)~ \signal{x}\le\signal{y}\Rightarrow f(\signal{x})\le f(\signal{y})$; 
			\item[(ii)] [scalability] $(\forall \alpha>1)(\forall \signal{x}\in\real_{++}^N)~ f(\alpha\signal{x})<\alpha f(\signal{x})$; and
			\item[(iii)] [positivity (bounded away from $0$)] $\inf_{\signal{x}\in\real_{++}^N}f(\signal{x})>0$.
			
		\end{itemize}
		Likewise, we say that a mapping $T:\real^N_{++}\to\real_{++}^N:\signal{x}\mapsto(f_1(\signal{x}),\ldots,f_N(\signal{x}))$ is an MSP mapping if the coordinate functions $(f_n:\real_{++}^N\to\real_{++})_{n\in\{1,\ldots,N\}}$ are MSP functions.
	\end{definition}

	The key distinction between Definition~\ref{def.sif} and the original standard interference functions in  \cite{yates95} lies in the domain of the mappings. Standard interference functions restricted to the positive cone $\real_{++}^N$ are MSP functions by definition, and both classes of functions are continuous on $\real_{++}^N$ \cite{burbanks2003extension}. However, the continuous extensions of MSP functions to the nonnegative cone $\real_{+}^N$, which are guaranteed to exist \cite{burbanks2003extension}\cite[Theorem~5.1.5]{lem13}, are not standard interference functions in general. Without additional assumptions, these existing results and the positivity property in Definition~\ref{def.sif} guarantee only the preservation of positivity, monotonicity, and weak scalability on $\real_{+}^N\setminus \real_{++}^N$-- i.e., the strict inequality in the scalability condition has to be relaxed to a weak inequality in general. For example, consider the continuous extension $\bar{f}:\real^2_{+}\to\real_{++}$ of the function $f:\real^2_{++}\to\real_{++}:(x_1,x_2)\mapsto 1+x_1+\sqrt{x_2}+\max\{x_1-1,0\}$ to the domain $\real_{+}^2$, which is just the function $f$ itself with its domain extended to $\real_{+}^2$. It satisfies all properties in Definition \ref{def.sif} on $\real_{++}^2$, but, for $\signal{x}=(1,0)\notin\real_{++}^2$, we do not have scalability because $(\forall\alpha>1)~\bar{f}(\alpha\signal{x})=\alpha \bar{f}(\signal{x})$.

In Definition~\ref{def.sif} we exclude the boundary of the nonnegative cone to avoid unnecessary notational clutter in the applications we consider. To illustrate this point, consider the function $f$ in Fact~\ref{fact.concavity}, which is an MSP function, so it can be continuously extended to the boundary $\real_+\setminus \real_{++}$ of its original domain. From a notational standpoint, mathematical rigor requires the extension $\bar{f}:\real_+\to\real_{+}$ of $f$ to be written as $(\forall x\ge0)~\bar{f}(x):=\begin{cases} f(x),\text{if } x_1 > 0 \\ \sigma/c_1\text{ otherwise} \end{cases}$ to avoid divisions by zero. Another subtlety is that the definition in \cite{yates95} does not require continuity on the boundary of the domain \cite{cavalcante2016}. For instance, the function on $\real_+$ defined by $f(x)=2$ if $x>0$ and $f(0)=1$ is a standard interference function in the sense defined in \cite{yates95}. With Definition~\ref{def.sif}, we relegate boundary issues to the proofs. We also note that the assumption of positivity in the original definition of standard interference functions in \cite{yates95} is known to be redundant -- it follows from the other properties of the definition \cite{leung2004}. In contrast, the assumption of positivity in Definition~\ref{def.sif}, which is essential for our proofs based on the arguments in \cite{nuzman07,cavalcante2023}, does not follow from properties (i) and (ii). For example, the function $f:\real_{++}\to\real_{++}:x\mapsto\log(1+x)$ satisfies properties (i) and (ii), but it fails to satisfy property (iii) because $\inf_{x>0}f(x)=\lim_{x\to 0^+} f(x) = 0$.

Although the above aspects might prevent the direct application of some results for standard interference functions to MSP functions -- because, as discussed above, the set of standard interference functions restricted to the positive cone is a proper subset of MSP functions\footnote{In turn, MSP functions are a proper subclass of order-preserving subhomogeneous functions \cite{lem13,krause2015positive}.} --, many key properties remain analogous. In particular, the class of MSP functions is closed under addition, positive scaling, and finite maxima and minima, just as in the case of standard interference functions. Furthermore, any MSP mapping $T:\real_{++}^N\to\real_{++}^N$ is contractive in the complete metric space $(\real_{++}^N, d)$ \cite[Lemma~2.1.7]{lem13} (but not necessarily a Lipschitz contraction), where $d:\real_{++}^N\times\real_{++}^N\to\real_{+}$ is Thompson's metric defined in \cite[p.~30]{lem13}, so $T$ has at most one fixed point. If $\mathrm{Fix}(T)\neq\emptyset$,  arguments similar to those described in \cite{yates95} to prove convergence of the sequence $(\signal{x}_n)_{n\in\Natural}$ generated via the fixed point iterations $(\forall n\in\Natural)~\signal{x}_{n+1}=T(\signal{x}_n)$, with $\signal{x}_1\in\real_{++}^N$, can be used for MSP mappings. In addition, the following property, which is crucial for the power control algorithms derived in the next section, can be established using the same arguments in  \cite{nuzman07}, and they are therefore omitted.

\begin{proposition}
	\label{prop.nuzman} 
	Let $T:\real_{++}^N\to\real_{++}^N$ be an MSP mapping. Given a monotone norm $\|\cdot\|$, there exists exactly one tuple $(\gamma^\star,\signal{x}^\star)\in\real_{++}\times\real_{++}^N$ solving the so-called conditional (nonlinear) eigenvalue problem:
	\begin{align}
		\label{eq.condeig}
		\text{ Find } (\gamma,\signal{x})\in\real_{++}\times\real_{++}^N \text{ s.t. }T(\signal{x})=\gamma\signal{x},~\|\signal{x}\|=1. 
	\end{align}
	Furthermore, the solution $(\gamma^\star,\signal{x}^\star)$ to \refeq{eq.condeig} (called the conditional eigenpair) is such that the conditional eigenvalue $\gamma^\star$ satisfies $\gamma^\star=\|T(\signal{x}^\star)\|$, and the conditional eigenvector $\signal{x}^\star$ is the unique fixed point $\signal{x}^\star\in\mathrm{Fix}(G)$ of the mapping $G:\real_{++}^N\to\real_{++}^N:\signal{x}\mapsto (1/\|T(\signal{x})\|) T(\signal{x})$. In addition, given $\signal{x}_1\in\real_{++}^N$, the sequence $(\signal{x}_n)_{n\in\Natural}$ generated via $(\forall n\in\Natural)~\signal{x}_{n+1} = G(\signal{x}_n)$ converges to $\signal{x}^\star$.
\end{proposition}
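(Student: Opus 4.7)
The plan is to adapt the Thompson-metric machinery of \cite{nuzman07} and \cite[Ch.~2]{lem13}, leveraging the fact highlighted in the paragraph preceding the proposition that every MSP mapping is a strict contraction of $(\real_{++}^N,d)$.

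First, I would recast \refeq{eq.condeig} as a fixed point problem for $G$. It is immediate that $G$ is well-defined on $\real_{++}^N$ (because $T(\signal{x})\in\real_{++}^N$ and monotonicity of $\|\cdot\|$ on $\real_+^N$ give $\|T(\signal{x})\|>0$) and that it maps into the positive unit sphere $S:=\{\signal{x}\in\real_{++}^N\mid\|\signal{x}\|=1\}$. A short calculation then establishes a bijection between $\mathrm{Fix}(G)$ and the solutions of \refeq{eq.condeig}: if $T(\signal{x})=\gamma\signal{x}$ with $\|\signal{x}\|=1$, then $\|T(\signal{x})\|=\gamma\|\signal{x}\|=\gamma$ and $G(\signal{x})=\signal{x}$; conversely, $G(\signal{x})=\signal{x}$ forces $T(\signal{x})=\|T(\signal{x})\|\signal{x}$ with $\|\signal{x}\|=\|G(\signal{x})\|=1$, so $\gamma^\star:=\|T(\signal{x})\|>0$ and $(\gamma^\star,\signal{x})$ solves \refeq{eq.condeig}. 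This reduces the proposition to three subproblems: (a) existence, (b) uniqueness of a fixed point of $G$, and (c) convergence of the iterates $\signal{x}_{n+1}=G(\signal{x}_n)$ to it.

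For (b) and (c), the plan is to transfer the Thompson strict contractivity of $T$ to $G$ by arguments analogous to those of \cite{nuzman07}: the normalization step does not destroy strict contractivity, because $d$ is invariant under joint positive scalar multiplication of its arguments and the monotonicity of $\|\cdot\|$ controls the residual scaling factors, yielding $d(G(\signal{x}),G(\signal{y}))<d(\signal{x},\signal{y})$ whenever $\signal{x}\neq\signal{y}$. Uniqueness of any fixed point of $G$ is then immediate, and, once existence is in hand, convergence follows by the standard compactness template used in \cite{yates95,nuzman07}: the sequence $(d(\signal{x}_n,\signal{x}^\star))$ is strictly decreasing, so $(\signal{x}_n)$ lies in a $d$-bounded -- hence Euclidean-relatively-compact -- subset of $\real_{++}^N$, and continuity of $G$ forces every accumulation point to equal $\signal{x}^\star$.

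The main obstacle is (a), the existence of a fixed point of $G$. Since $(S,d)$ is not complete -- iterates could in principle drift to the boundary of $\real_+^N$ -- the Banach fixed point theorem does not apply directly. This is exactly where the positivity property (iii) of Definition~\ref{def.sif}, which separates MSP functions from mere order-preserving subhomogeneous ones, becomes indispensable: the uniform lower bound $T(\signal{x})\ge\signal{c}>\signal{0}$, combined with the scalability property, allows one to follow \cite{nuzman07} in constructing a Euclidean-compact, $d$-closed invariant subset $K\subset S$ of $G$. Restricting $G$ to $K$ and invoking the contraction property above then delivers $\signal{x}^\star\in K\subset S$, completing the argument.
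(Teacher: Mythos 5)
Your overall architecture --- reduce \refeq{eq.condeig} to $\mathrm{Fix}(G)$, use positivity (iii) to build a compact invariant subset of the unit sphere, and conclude via a strict-contraction-plus-compactness (Edelstein-type) argument --- is the route the paper intends when it points to \cite{nuzman07}, and your bijection between $\mathrm{Fix}(G)$ and the solutions of \refeq{eq.condeig} is correct. However, the central technical claim in your steps (b)--(c) is false: the normalized map $G$ is \emph{not} a strict contraction in Thompson's metric $d$ on the unit sphere. Thompson's metric is invariant only under \emph{equal} positive scaling of both arguments, whereas here the two points are rescaled by the different factors $\|T(\signal{x})\|^{-1}$ and $\|T(\signal{y})\|^{-1}$; monotonicity of the norm pins the ratio $\|T(\signal{y})\|/\|T(\signal{x})\|$ only between $M(T(\signal{x})/T(\signal{y}))^{-1}$ and $M(T(\signal{y})/T(\signal{x}))$, where $M(\signal{a}/\signal{b}):=\max_i a_i/b_i$, and that is not enough. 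Concretely, take $N=3$, $\|\cdot\|=\|\cdot\|_\infty$, the MSP mapping $T(\signal{z})=(\eta z_1+\epsilon,\,\eta z_2+\epsilon,\, z_3+\epsilon)$ with $0<\epsilon\ll\eta\ll 1$, and the unit-norm points $\signal{x}=(1,1/2,1)$, $\signal{y}=(1/4,1,1/3)$. Then $d(\signal{x},\signal{y})=\log\max\{4,2\}=\log 4$, while $G(\signal{x})\approx(\eta,\eta/2,1)$ and $G(\signal{y})\approx(3\eta/4,3\eta,1)$, so $d(G(\signal{x}),G(\signal{y}))\approx\log 6>\log 4$.

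The gap is repairable, and the repair is essentially what \cite{nuzman07} does: work on the unit sphere with the Hilbert projective metric $d_H(\signal{a},\signal{b}):=\log\bigl(M(\signal{a}/\signal{b})\,M(\signal{b}/\signal{a})\bigr)$, which \emph{is} invariant under independent rescaling of each argument and is a genuine metric on the unit sphere of a monotone norm (there $d\le d_H\le 2d$, so it induces the same topology and the same notion of convergence). Monotonicity plus strict scalability give $M(T(\signal{x})/T(\signal{y}))\le\max\{1,M(\signal{x}/\signal{y})\}$ with strict inequality whenever $M(\signal{x}/\signal{y})>1$, and symmetrically for the other factor; since for distinct unit-norm $\signal{x},\signal{y}$ both factors $M(\signal{x}/\signal{y}),M(\signal{y}/\signal{x})$ are at least $1$ and at least one exceeds $1$, this yields $d_H(G(\signal{x}),G(\signal{y}))=d_H(T(\signal{x}),T(\signal{y}))<d_H(\signal{x},\signal{y})$. (In the example above, $d_H$ indeed contracts: the value drops strictly below $d_H(\signal{x},\signal{y})=\log 8$.) With this substitution your existence step --- the compact $G$-invariant subset of the sphere obtained from property (iii) together with the upper bound $\|T(\signal{x})\|\le\lambda\|T(\lambda^{-1}\signal{x})\|$ on the sphere --- and the compactness-based convergence argument go through unchanged.
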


The above result is especially relevant to the following class of functions:

\begin{proposition}
	\label{proposition.MSP}
	Given a probability space $(\Omega, \mathcal{F}, \P)$, assume that $g:\real_{++}^N\times\Omega\to\real_{++}$ and $h:\real_{++}^N\to\real_{++}$ are functions such that, for almost every $\omega\in\Omega$, the function $q_\omega:\real_{++}^N\to\real_{++}:\signal{x}\mapsto z(\signal{x},\omega)$, where  $$z:\real_{++}^N\times\Omega\to \real_{++}:(\signal{x},\omega)\mapsto h(\signal{x})/g(\signal{x},\omega),$$ is an MSP function. Further assume that there exists a random variable $\delta:\Omega\to\real_{++}$ such that, for every $\signal{x}\in\real_{++}^N$ and for almost every $\omega\in\Omega$, we have $z(\signal{x},\omega)\ge\delta(\omega)>0$, and that the expectations $E[(\delta(\omega))^{-1}]$ and $E[g(\signal{x},\omega)]:=\int g(\signal{x},\omega)~\P(\mathrm{d}\omega)\ge 0$ are well defined for every $\signal{x}\in\real_{++}^N$ (hence $E[(\delta(\omega))^{-1}]\in\real_{++}$ and $(\forall\signal{x}\in\real_{++}^N)~ E[g(\signal{x},\omega)]\in\real_{+}$). Then 
	\begin{align}
		f:\real^N_{++}\to\real_{++}:\signal{x}\mapsto\dfrac{h(\signal{x})}{E[g(\signal{x},\omega)]}
	\end{align}
	is an MSP function.
\end{proposition}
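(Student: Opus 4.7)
The strategy is to verify each of the three defining properties of an MSP function (monotonicity, scalability, positivity bounded away from zero) directly, starting from the corresponding a.s.\ property of the random family $q_\omega$ and transferring it to $f$ by taking expectations. A preliminary observation needed throughout is that $E[g(\signal{x},\omega)]>0$ for every $\signal{x}\in\real_{++}^N$: since $q_\omega=h/g$ maps into $\real_{++}$ with $h(\signal{x})>0$, we must have $g(\signal{x},\omega)>0$ a.s., and then Fact~\ref{fact.positivity} applied with $\delta=0$ yields $E[g(\signal{x},\omega)]>0$. This justifies the divisions carried out below.

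\textbf{Monotonicity.} Given $\signal{x}\le\signal{y}$ in $\real_{++}^N$, the MSP property of $q_\omega$ gives $h(\signal{x})/g(\signal{x},\omega)\le h(\signal{y})/g(\signal{y},\omega)$ a.s.\ Cross-multiplying (all quantities positive) yields $h(\signal{x})\,g(\signal{y},\omega)\le h(\signal{y})\,g(\signal{x},\omega)$ a.s. Taking expectations preserves the inequality, and dividing by the strictly positive product $E[g(\signal{x},\omega)]\,E[g(\signal{y},\omega)]$ gives $f(\signal{x})\le f(\signal{y})$.

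\textbf{Scalability.} For $\alpha>1$ and $\signal{x}\in\real_{++}^N$, MSP of $q_\omega$ gives the \emph{strict} a.s.\ inequality $h(\alpha\signal{x})/g(\alpha\signal{x},\omega)<\alpha\,h(\signal{x})/g(\signal{x},\omega)$, i.e.\ the random variable $Y(\omega):=\alpha\,h(\signal{x})\,g(\alpha\signal{x},\omega)-h(\alpha\signal{x})\,g(\signal{x},\omega)$ is strictly positive a.s. Here is the one delicate step of the proof: strict pointwise inequalities need not survive integration in general. I would invoke Fact~\ref{fact.positivity} (with $\delta=0$) applied to $Y$, whose expectation is finite because $E[g(\signal{x},\omega)]$ and $E[g(\alpha\signal{x},\omega)]$ are finite by assumption, to conclude $E[Y(\omega)]>0$. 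Rearranging and dividing by $E[g(\alpha\signal{x},\omega)]\,E[g(\signal{x},\omega)]>0$ yields $f(\alpha\signal{x})<\alpha f(\signal{x})$.

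\textbf{Positivity bounded away from zero.} The hypothesis $z(\signal{x},\omega)\ge\delta(\omega)$ a.s.\ rewrites as $g(\signal{x},\omega)\le h(\signal{x})/\delta(\omega)$ a.s., so taking expectations gives $E[g(\signal{x},\omega)]\le h(\signal{x})\,E[(\delta(\omega))^{-1}]$. Dividing $h(\signal{x})$ by both sides yields the uniform bound $f(\signal{x})\ge 1/E[(\delta(\omega))^{-1}]>0$, which is independent of $\signal{x}$, so the infimum of $f$ over $\real_{++}^N$ is strictly positive.

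\textbf{Main obstacle.} The only step that is not a routine manipulation is the scalability property, where one must promote a strict pointwise inequality to a strict inequality of expectations; this is precisely the role of Fact~\ref{fact.positivity}, and it is the reason that fact was stated with the parameter $\delta\ge 0$ (rather than $\delta>0$). Everything else is algebraic reshuffling using positivity of $h$, $g$, and their expectations.
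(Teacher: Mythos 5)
Your proof is correct and follows essentially the same route as the paper's: transfer the a.s.\ inequalities for $q_\omega$ to $f$ by taking expectations of the (cross-multiplied or reciprocal) forms, invoking Fact~\ref{fact.positivity} exactly where a strict inequality must survive integration (scalability) and to get $E[g(\signal{x},\omega)]>0$. The only difference is in the positivity step, where you use the hypothesis $z(\signal{x},\omega)\ge\delta(\omega)$ directly to get $1/f(\signal{x})=E[1/z(\signal{x},\omega)]\le E[1/\delta(\omega)]<\infty$, whereas the paper reaches the same bound after a detour through the continuous extension of $q_\omega$ to the boundary; your shortcut is valid and slightly cleaner.
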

\begin{proof}
		
	(Monotonicity) Fix two vectors $\signal{x}\in\real_{++}^N$ and $\signal{y}\in\real_{++}^N$ such that $\signal{x}\le\signal{y}$. For almost every $\omega\in\Omega$, $q_\omega$ is an MSP function by assumption, and thus we have the following chain of inequalities:
	\begin{multline}
		\dfrac{1}{z(\signal{x},\omega)}=\dfrac{g(\signal{x},\omega)}{h(\signal{x})} \ge \dfrac{1}{z(\signal{y},\omega)} = \dfrac{g(\signal{y},\omega)}{h(\signal{y})}>0.
	\end{multline}
	Monotonicity of expectations and Fact~\ref{fact.positivity} yield $1/f(\signal{x})=E[g(\signal{x},\omega)]/h(\signal{x})\ge 1/f(\signal{y})=E[g(\signal{y},\omega)]/h(\signal{y})>0$, which implies monotonicity of $f$. \\

	(Scalability) Fix $\alpha>1$ and $\signal{x}\in\real_{++}^N$ arbitrarily. Using scalability and positivity of $q_\omega$ for almost every $\omega\in\Omega$, we deduce
	\begin{multline}
	\dfrac{1}{z(\alpha\signal{x},\omega)}=\dfrac{g(\alpha \signal{x},\omega)}{h(\alpha \signal{x})} > \dfrac{1}{\alpha~z(\signal{x},\omega)} = \dfrac{g(\signal{x},\omega)}{\alpha h(\signal{x})}>0.
	\end{multline}
		Taking expectations on both sides of the inequality, and using Fact~\ref{fact.positivity} by setting $X:\Omega\to\real:\omega\mapsto 1/z(\alpha\signal{x},\omega)-1/(\alpha~z(\signal{x},\omega))>0$, we obtain $1/f(\alpha\signal{x})=E[g(\alpha\signal{x},\omega)]/h(\alpha\signal{x})>E[g(\signal{x},\omega)]/(\alpha h(\signal{x})) =1/(\alpha~f(\signal{x}))>0$, which implies scalability of $f$.

	(Positivity) For almost every $\omega\in\Omega$, the function $q_\omega$ is an MSP function by assumption, so it can be continuously extended to the domain $\real_{+}^N$ while preserving monotonicity \cite[Theorem~5.1.5]{lem13}. We denote this continuous extension of $q_\omega$ to the domain $\real_{+}^N$ by $\bar{q}_\omega:\real_{+}^N\to\real_+$. Let $(\signal{y}_n)_{n\in\Natural}$ be a sequence in $\real_{++}^N$ converging to $\signal{0}$. Hence, from the assumptions of the proposition, there exists $\delta:\Omega\to\real_{++}$ such that, for almost every $\omega\in\Omega$, we have
	\begin{multline*}
		(\forall \signal{x}\in\real_{++}^N)~0<\delta(\omega)\le \inf_{\signal{y}\in\real_{++}^N} {z}(\signal{y},\omega) = \inf_{\signal{y}\in\real_{++}^N} \bar{q}_\omega(\signal{y}) \\ \le \lim_{n\to\infty} \bar{q}_\omega(\signal{y}_n) = \bar{q}_\omega(\signal{0})\stackrel{(a)}{\le} \bar{q}_\omega(\signal{x})=z(\signal{x},\omega)\in\real_{++},
	\end{multline*}
	where (a) follows from monotonicity of $\bar{q}_\omega$, as already established above. Taking reciprocals, computing expectations, and recalling that $E[(\delta(\omega))^{-1}]<\infty$ by assumption, we verify that
	\begin{align*}
	(\forall\signal{x}\in\real_{++}^N)~\infty> E\left[\dfrac{1}{\delta(\omega)}\right] \ge E\left[\dfrac{1}{{z}(\signal{x},\omega)}\right] = \dfrac{1}{f(\signal{x})}>0,
	\end{align*}
	which implies the desired result $\inf_{\signal{x}\in\real_{++}^N}f(\signal{x})\ge (E\left[{1}/{\delta(\omega)}\right])^{-1}>0$.
\end{proof}

\section{Power control in multi-user MIMO}

We consider the uplink of a general multi-user MIMO network with $L\in\Natural$ access points, each equipped with $M\in\Natural$ antennas. Thus, the total number of antennas on the access point side is $K:=LM$. The set of access points is denoted by $\mathcal{A}:=\{1,\ldots,L\}$. The system also includes $N\in\Natural$ single-antenna users, which are indexed by the set $\mathcal{U}:=\{1,\ldots,N\}$. The channel and beamforming vectors for user $u\in\mathcal{U}$ and access point $a\in\mathcal{A}$ are random (column) vectors denoted by, respectively, $\signal{h}_{u,a}:\Omega\to\mathbb{C}^M$ and $\signal{v}_{u,a}:\Omega\to\mathbb{C}^M$, where $(\Omega,\mathcal{F},\P)$ is the probability space as defined below Fact~\ref{fact.concavity}. The transmit power of user $u\in\mathcal{U}$ is denoted by $p_u\in\real_{++}$. Without any loss of generality, we assume that $(\forall\omega\in\Omega)~ \signal{v}_{u,a}(\omega)=\signal{0}$ if access point $a\in\mathcal{A}$ does not participate in the processing of the signal transmitted by user $u\in\mathcal{U}$, and that $\|\signal{v}_u(\omega)\|_2=1$ for almost every $\omega\in\Omega$ and every $u\in\mathcal{U}$. The aggregated channel vector and aggregated beamforming vector for user $u\in\mathcal{U}$ are denoted by, respectively, $\signal{h}_u:\Omega\to\mathbb{C}^{K}:\omega\mapsto[\signal{h}_{u,1}^t(\omega),\ldots,\signal{h}^t_{u,L}(\omega)]^t$ and $\signal{v}_u:\Omega\to\mathbb{C}^{K}:\omega\mapsto[\signal{v}^t_{u,1}(\omega),\ldots,\signal{v}^t_{u,L}(\omega)]^t$.

The above model covers many common technologies in the literature, including multi-cell massive MIMO networks as defined in \cite{massivemimobook} and  user-centric cell-less networks with different levels of cooperation as defined in \cite{demir2021}. For brevity and notational simplicity (see Remark~\ref{remark.bound} below), we evaluate the utility (performance) of user $u\in\mathcal{U}$ as a function of the power allocation $\signal{p}=[p_1,\ldots,p_N]^t\in\real_{++}^N$ via the following well-known achievable rate under perfect CSI \cite[Proposition~1]{miretti2025joint}\cite[Lemma~1]{caire2018ergodic}:
\begin{align}
	\label{eq.utility}
	r_u:\real^N_{++}\to\real_+:\signal{p}\mapsto E\left[\log(1+s_u(\signal{p},\omega))\right],
\end{align}
where 
\begin{align}
	\label{eq.su}
	\begin{array}{rcl}
		s_u:\real_{++}^N\times\Omega&\to&\real_+\\ (\signal{p},\omega)&\mapsto&\dfrac{p_u |\signal{h}_u(\omega)^H\signal{v}_u(\omega)|^2}{\sum_{k\in\mathcal{U}\setminus\{u\}} p_k|\signal{h}_k(\omega)^H\signal{v}_u(\omega)|^2 + \sigma^2},
	\end{array}
\end{align}
can be interpreted as the instantaneous SINR function and $\sigma^2\in\real_{++}$ is the noise power. Important aspects of the model we consider are discussed below.

\begin{remark}\label{remark:beamformers}
	The beamformers $(\signal{v}_u:\Omega\to\mathbb{C}^K)_{u\in\mathcal{U}}$ in \refeq{eq.su} are random vectors, so they are allowed to change for each channel realization. However, as clear from the notation, they should not depend on the power vector $\signal{p}$, the argument of the functions $(r_u)_{u\in\mathcal{U}}$. Prominent beamforming strategies satisfying this assumption include conventional conjugate beamforming, (regularized) zero-forcing beamforming, and MMSE-like beamformers such as those simulated in \cite{miretti2022closed} and the references therein. The general case will be considered elsewhere.
\end{remark}

\begin{remark}
The algorithms derived later require the computation of the expectation in \refeq{eq.utility}. In practice, this expectation can be approximated via Monte Carlo sampling, which requires statistical knowledge of the system. Although this assumption may appear stringent, statistical information, whether complete or partial, and Monte Carlo sampling are likewise required in power control schemes based on the UatF bound whenever existing channel models are involved  \cite{miretti2025two,demir2021,massivemimobook} and additional approximations are avoided. In many applications, computing expectations via empirical sample means is possible; e.g., in applications such as network planning and in the creation of datasets for training neural networks for power control \cite{chafaa2025}. Furthermore, in online applications with iterative power updates during data transmission, SINR samples can be obtained directly from pilot signals; e.g., from demodulation reference signals (DMRS) in 5G. A rigorous analysis of how finite-sample noise affects convergence is deferred to future studies. However, Sect.~\ref{sect.simulations} provides empirical evidence that its impact on the performance of the proposed algorithms can be limited, mirroring empirical results widely reported for UatF-based power allocation schemes requiring Monte Carlo sampling.
\end{remark}

\begin{remark}
	\label{remark.bound}
The utility in \refeq{eq.utility} is often called the ``genie-aided'' or ``optimistic'' upper bound for the achievable ergodic rates (when interference is treated as noise) because it assumes perfect CSI at the receiver \cite[p. 328]{demir2021}. Nevertheless, the results we derive can be readily extended to other  bounds, such as the coherent decoding lower bound in \cite[Theorem~5.1]{demir2021}\cite[Theorem~4.1]{massivemimobook}. While this bound uses channel estimates -- instead of perfect channels -- and  introduces an ``error covariance'' term in the denominator of the expectation in \refeq{eq.utility}, the subsequent analysis in this study applies to it in a line-by-line manner with only straightforward changes.
\end{remark}

We now establish basic properties of the utility functions $(r_u)_{u\in\mathcal{U}}$ that will allow us to derive power control algorithms based on the fixed point iterations described in the previous section. 

For every user $u\in\mathcal{U}$, we assume that $r_u(\signal{p})>0$ if $\signal{p}\in\real_{++}^N$  [$r_u$ is defined in \refeq{eq.utility}]. Otherwise, the user can be excluded from consideration because such a user cannot be served by the network. Therefore, denoting by $\gamma_u\in\real_+$ the utility of user $u\in\mathcal{U}$ for a given power $\signal{p}\in\real_{++}^N$, we deduce 
\begin{align*}
	0<\gamma_u = r_u(\signal{p}) \Leftrightarrow p_u={\gamma_u} f_u(\signal{p}),
\end{align*}
where 
\begin{align}
	\label{eq.msp}
	(\forall u\in\mathcal{U})~f_u:\real_{++}^N\to\real_{++}:\signal{p}\mapsto \dfrac{p_u}{r_u(\signal{p})}.
\end{align}
The next result establishes a fundamental connection among the utility in \refeq{eq.utility}, the function in \refeq{eq.msp}, and Proposition~\ref{proposition.MSP}:

\begin{proposition}
	\label{prop.utility}
	For every user $u\in\mathcal{U}$ and power vector $\signal{p}\in\real^N_{++}$, assume that the expectation in \refeq{eq.utility} is well defined,  $E[|\signal{h}_u(\omega)^H\signal{v}_u(\omega)|^2]<\infty$, and $|\signal{h}_u(\omega)^H\signal{v}_u(\omega)|\ne 0$ for almost every $\omega\in\Omega$.  Then  $f_u:\real^N_{++}\to\real_{++}$ in \refeq{eq.msp} is an MSP function for every $u\in\mathcal{U}$.
\end{proposition}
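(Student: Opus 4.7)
The plan is to reduce the statement to Proposition~\ref{proposition.MSP} via the decomposition $f_u(\signal{p}) = h(\signal{p})/E[g(\signal{p},\omega)]$ with $h(\signal{p}) := p_u$ and $g(\signal{p},\omega) := \log(1+s_u(\signal{p},\omega))$, so that the random function $q_\omega(\signal{p}) := h(\signal{p})/g(\signal{p},\omega)$ is exactly $p_u/\log(1+s_u(\signal{p},\omega))$. Under this identification it is enough to verify that, for almost every $\omega$, $q_\omega$ is an MSP function on $\real_{++}^N$, and that a random variable $\delta:\Omega\to\real_{++}$ exists with $q_\omega(\signal{p})\ge\delta(\omega)$ a.s.\ for every $\signal{p}\in\real_{++}^N$ and $E[1/\delta(\omega)]<\infty$. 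The remaining requirement of Proposition~\ref{proposition.MSP}, namely that $E[g(\signal{p},\omega)]\in\real_+$, is precisely the first standing hypothesis of the statement.

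The assumption $|\signal{h}_u(\omega)^H\signal{v}_u(\omega)|\ne 0$ a.s.\ guarantees that the coefficient multiplying $p_u$ in the numerator of $s_u$ is almost surely strictly positive, so Fact~\ref{fact.concavity} applies pointwise (in $\omega$) to \refeq{eq.su}: for a.e.\ $\omega$, $q_\omega$ is concave on $\real_{++}^N$ and admits a continuous concave extension $\bar q_\omega:\real_+^N\to\real_{++}$ satisfying $\bar q_\omega(\signal{p})\ge \bar q_\omega(\signal{0})=\sigma^2/|\signal{h}_u(\omega)^H\signal{v}_u(\omega)|^2>0$. Positivity of $q_\omega$ is then immediate, and scalability drops out of concavity together with the strict positivity of $\bar q_\omega(\signal{0})$: for $\alpha>1$, writing $\signal{p}=(1/\alpha)(\alpha\signal{p})+(1-1/\alpha)\signal{0}$ and invoking the concavity of $\bar q_\omega$ yields $q_\omega(\alpha\signal{p})\le \alpha q_\omega(\signal{p})-(\alpha-1)\bar q_\omega(\signal{0})<\alpha q_\omega(\signal{p})$.

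Monotonicity is the one MSP property that concavity does not hand over for free, and I expect this to be the main (though mild) obstacle. In the coordinates $p_k$ for $k\ne u$ it is transparent: $s_u$ is strictly decreasing in $p_k$, hence $\log(1+s_u)$ decreases and $q_\omega = p_u/\log(1+s_u)$ increases. Monotonicity in $p_u$ itself requires a direct computation: the sign of $\partial q_\omega/\partial p_u$ reduces to the elementary inequality $\log(1+s_u)\ge s_u/(1+s_u)$, which holds for every $s_u\ge 0$. With $q_\omega$ shown to be MSP a.s., the choice $\delta(\omega) := \sigma^2/|\signal{h}_u(\omega)^H\signal{v}_u(\omega)|^2$ inherits the required lower-bound property from the positivity argument, and $E[1/\delta(\omega)] = \sigma^{-2}\,E[|\signal{h}_u(\omega)^H\signal{v}_u(\omega)|^2]<\infty$ is precisely the second standing hypothesis of the statement. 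Proposition~\ref{proposition.MSP} then delivers the conclusion.
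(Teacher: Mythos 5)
Your proof is correct and follows essentially the same route as the paper: the same decomposition $h(\signal{p})=p_u$, $g(\signal{p},\omega)=\log(1+s_u(\signal{p},\omega))$, Fact~\ref{fact.concavity} for concavity and the strictly positive value $\sigma^2/|\signal{h}_u(\omega)^H\signal{v}_u(\omega)|^2$ of the extension at $\signal{0}$ (which also serves as $\delta(\omega)$, with $E[1/\delta(\omega)]<\infty$ by hypothesis), followed by Proposition~\ref{proposition.MSP}. The only difference is that where the paper invokes \cite[Proposition~1]{cavalcante2016} to conclude that a positive concave function on the nonnegative orthant is MSP, you verify scalability (from concavity and $\bar q_\omega(\signal{0})>0$) and monotonicity (via the inequality $\log(1+s)\ge s/(1+s)$) by hand --- a correct inline reproof of that cited lemma, whose monotonicity part in fact also follows from concavity and positivity on the unbounded domain $\real_+^N$.
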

\begin{proof}
	Fix $u\in\mathcal{U}$ arbitrarily. Fact~\ref{fact.concavity} and the assumption $|\signal{h}_u(\omega)^H\signal{v}_u(\omega)|\ne 0$ (a.s.) show that, for almost every $\omega\in\Omega$, the function 
	\begin{align*} 
		q_\omega:\real_{++}^N\to\real_+:\signal{p}\mapsto p_u/\log(1+s_u(\signal{p},\omega)),
	\end{align*} 
	is concave [$s_u$ is defined in \refeq{eq.su}], and it has a continuous concave extension $\bar{q}_\omega:\real_{+}^N\to\real_{++}$ satisfying $(\forall\signal{p}\in\real_{++}^N)~q_\omega(\signal{p})\ge \bar{q}_\omega(\signal{0})=\sigma^2/|\signal{h}_u(\omega)^H\signal{v}_u(\omega)|^2>0$. Now apply \cite[Proposition~1]{cavalcante2016} to $\bar{q}_\omega$ to conclude that ${q}_\omega$, the restriction of $\bar{q}_\omega$ to the domain $\real_{++}^N$, is an MSP function. The desired result now follows as an immediate application of Proposition~\ref{proposition.MSP}.
\end{proof}

We now have all the necessary background to pose and solve a power control problem that has the objective of maximizing the minimum (weighted) achievable rate  while satisfying the power constraint $\{\signal{p}\in\real_{++}^N\mid\|\signal{p}\|\le p_\mathrm{max}\}$, where $\|\cdot\|$ is a given monotone norm (typically the $l_\infty$-norm in the uplink) and $p_\mathrm{max}>0$ the maximum transmit power. Formally, we consider the following optimization problem:
\begin{align}
	\label{eq.original}
	\begin{array}{rl}
		\text{maximize}_{\signal{p}\in\real_{++}^N} &\min_{u\in\mathcal{U}} \alpha_u^{-1}~r_u(\signal{p}) \\
		\text{s.t.} & \|\signal{p}\|\le p_\textrm{max},
	\end{array}
\end{align}
where $(\alpha_1,\dots,\alpha_N)\in\real_{++}^N$ corresponds to the  weights (priorities) assigned to users, and $r_u$ is the achievable rate of user $u\in \mathcal{U}$ in \refeq{eq.utility}. In light of \refeq{eq.msp}, the epigraph form of the above problem is given by:
\begin{align}
	\label{eq.epigraph}
	\begin{array}{rl}
\text{maximize}_{(\gamma,\signal{p})\in\real_{++}\times\real_{++}^N} & \gamma \\
\text{s.t.} & \signal{p} \ge \gamma T(\signal{p}) \\
& \|\signal{p}\|\le p_\textrm{max},
\end{array}
\end{align}
where $T:\real_{++}^N\mapsto \real_{++}^N:\signal{p}\mapsto [\alpha_1 f_1(\signal{p}),\ldots,\alpha_N f_N(\signal{p})]^t$, with $(f_u)_{u\in\mathcal{U}}$ as defined in \refeq{eq.msp}, is an MSP mapping as an immediate consequence of Proposition~\ref{prop.utility}.

We can use the same arguments in \cite[Proposition~2]{cavalcante2023} with standard interference functions replaced with MSP functions to show that, although the solution to \refeq{eq.epigraph} (and, hence, \refeq{eq.original}) may not be unique, there always exists a solution achieving equality in all constraints; i.e., there exists a solution $(\gamma^\star,\signal{p}^\star)$ satisfying $(\gamma^\star)^{-1} \signal{p}^\star= T(\signal{p}^\star)$ and $(1/p_\text{max})\|\signal{p}^\star\|=1$. These equalities establish that $(\gamma^\star)^{-1}$ and $\signal{p}^\star$ are, respectively, the conditional eigenvalue and conditional eigenvector of the mapping $T$ and the monotone norm $(\forall\signal{p}\in\real^N)~ \|\signal{p}\|_\star :=  (1/p_\text{max})\|\signal{p}\|$. Therefore, as a corollary of Proposition~\ref{prop.nuzman}, Proposition~\ref{prop.utility}, and \cite[Proposition~2]{cavalcante2023} (after adapting the proof of this last proposition by substituting standard interference functions with MSP functions), we obtain the following result, which shows a simple fixed point method that provably converges to the solution to \refeq{eq.original} with minimum total power:
\begin{Cor}
	\label{cor.fpi}
	Let the assumptions in Proposition~\ref{prop.utility} be valid. Denote by $\mathcal{S}\neq\emptyset$ the set of solutions to \refeq{eq.original} and let $\|\cdot\|$ be the monotone norm in \refeq{eq.original}. Given $\signal{p}_1\in\real_{++}^N$, the sequence  $(\signal{p}_n)_{n\in\Natural}$ generated via $(\forall n\in\Natural)~ \signal{p}_{n+1}=(p_\mathrm{max}/\|T(\signal{p}_n)\|) T(\signal{p}_n)$ converges to the power vector $\signal{p}^\star\in\real_{++}^N$ that solves the optimization problem  $$\mathrm{minimize}_{\signal{p}\in\mathcal{S}}\|\signal{p}\|_1.$$
\end{Cor}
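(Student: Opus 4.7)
My plan is to combine three ingredients already established in the excerpt: the MSP property of $T$ (Proposition~\ref{prop.utility} together with closure of the MSP class under positive scaling, which is straightforward to verify from Definition~\ref{def.sif}), the conditional eigenvalue theory in Proposition~\ref{prop.nuzman}, and the MSP-adapted characterization of max-min optima from \cite[Proposition~2]{cavalcante2023} cited immediately before the corollary.

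First I would rescale by the monotone norm $\|\cdot\|_\star := (1/p_\mathrm{max})\|\cdot\|$, under which the prescribed iteration takes the form $\signal{p}_{n+1} = (1/\|T(\signal{p}_n)\|_\star)\,T(\signal{p}_n)$, i.e., exactly the mapping $G$ of Proposition~\ref{prop.nuzman}. Since each coordinate $\signal{p}\mapsto\alpha_u f_u(\signal{p})$ of $T$ is MSP by Proposition~\ref{prop.utility}, the mapping $T$ itself is MSP, and Proposition~\ref{prop.nuzman} yields a unique conditional eigenpair $(\gamma^\star,\signal{p}^\star)\in\real_{++}\times\real_{++}^N$ satisfying $T(\signal{p}^\star)=\gamma^\star\signal{p}^\star$ and $\|\signal{p}^\star\|=p_\mathrm{max}$, together with the convergence $\signal{p}_n\to\signal{p}^\star$ for every $\signal{p}_1\in\real_{++}^N$. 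These equalities coincide with the equality-in-all-constraints conditions for \refeq{eq.epigraph} derived in the paragraph preceding the corollary, so $\signal{p}^\star\in\mathcal{S}$ and the optimal value of \refeq{eq.original} equals $1/\gamma^\star$.

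Second, to identify $\signal{p}^\star$ as the $\|\cdot\|_1$-minimizer over $\mathcal{S}$, I would show that every $\signal{p}'\in\mathcal{S}$ satisfies $\signal{p}'\ge\signal{p}^\star$ coordinatewise. Optimality of $\signal{p}'$ means $r_u(\signal{p}')/\alpha_u \ge 1/\gamma^\star$ for every $u\in\mathcal{U}$, which via \refeq{eq.msp} is equivalent to $\signal{p}'\ge (1/\gamma^\star)\,T(\signal{p}')$. Setting $S:=(1/\gamma^\star)T$---an MSP mapping with $\signal{p}^\star$ as a fixed point---and iterating monotonicity of $S$ starting from $\signal{p}'\ge S(\signal{p}')$, I obtain the coordinatewise decreasing sequence $\signal{p}'\ge S(\signal{p}')\ge S^2(\signal{p}')\ge\ldots$. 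By the convergence result for fixed point iterations of MSP mappings with nonempty $\mathrm{Fix}$ asserted in the paragraph preceding Proposition~\ref{prop.nuzman}, this sequence tends to $\signal{p}^\star$, yielding $\signal{p}'\ge\signal{p}^\star$. Monotonicity of $\|\cdot\|_1$ on $\real_+^N$ then gives $\|\signal{p}'\|_1\ge\|\signal{p}^\star\|_1$, with equality only if $\signal{p}'=\signal{p}^\star$.

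The principal technical delicacy is justifying convergence of the \emph{unnormalized} fixed point iteration $S^n(\signal{p}')\to\signal{p}^\star$ for an MSP mapping. The classical Yates argument \cite{yates95} uses scalability on the closed nonnegative cone, whereas MSP scalability is strict only on $\real_{++}^N$; however, since $\signal{p}'\in\real_{++}^N$ and $S$ maps $\real_{++}^N$ into itself, every iterate remains interior, so the Yates-type induction adapts line by line, exactly the adaptation already asserted in the paragraph preceding Proposition~\ref{prop.nuzman}. The boundary behaviour of the continuous extension supplied by \cite[Theorem~5.1.5]{lem13} would only need to be invoked to rule out limits on $\real_+^N\setminus\real_{++}^N$, which is ruled out here by the monotone lower bound provided by $\signal{p}^\star$ itself. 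The remainder is routine bookkeeping around Propositions~\ref{prop.utility} and \ref{prop.nuzman}.
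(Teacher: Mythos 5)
Your proposal is correct and follows essentially the route the paper itself takes, namely invoking Proposition~\ref{prop.nuzman} for the normalized iteration under the rescaled monotone norm $(1/p_\mathrm{max})\|\cdot\|$ and the MSP-adapted version of \cite[Proposition~2]{cavalcante2023} for the existence of an equality-achieving solution and its coordinatewise domination of every other solution; the paper leaves these details to the citations, whereas you reconstruct them explicitly. The only point worth tightening is the justification that $S^n(\signal{p}')\to\signal{p}^\star$ stays away from the boundary: rather than appealing vaguely to ``the lower bound provided by $\signal{p}^\star$ itself,'' note either that positivity (property (iii) of Definition~\ref{def.sif}) bounds all iterates below by a fixed positive vector, or that $\signal{p}'\ge\beta\signal{p}^\star$ for some $\beta\in\;]0,1]$ and monotonicity plus (weak) scalability propagate $S^n(\signal{p}')\ge\beta\signal{p}^\star$ to all iterates.
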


\section{Simulations}
\label{sect.simulations}

We consider a cell-less network with $L=16$ access points, each equipped with a four-element uniform linear array with half-wavelength spacing ($\lambda/2$), where $\lambda=c/f_{\mathrm{c}}$ is the wavelength, $f_{\mathrm{c}}$ is the carrier frequency, and $c$ is the propagation speed. The access points are placed uniformly over a $1000\times 1000~\mathrm{m}^2$ area. We distribute $N=25$ single-antenna users uniformly at random and apply a conventional wrap-around technique to emulate an unbounded service area. The user-access point antenna height difference is $\Delta h=11$ m. The large-scale fading (path loss) $\beta_{u,a}$ between user $u\in\mathcal{U}$ and access point $a\in\mathcal{A}$ follows the COST-231 Walfish–Ikegami UMi model \cite[Sect.~5.2]{3gpp}:
\begin{align*} \beta_{u,a} = -35.4 + 20\log_{10}(f_\mathrm{c}) + 26\log_{10}\left(\frac{d_{u,a}}{1~\text{m}}\right) + F_{u,a}~[\text{dB}], 
\end{align*}
where $d_{u,a}$ is the 3D distance and $F_{u,a}\sim\mathcal{N}(0,\sigma_{\mathrm{sf}}^{2})$ models shadow fading (in dB). Small-scale fading follows a spatially correlated Rician model. All beamformers operate on channel estimates -- rather than the true channels -- obtained using the estimation procedure described in \cite{ain2025optimal}. For pilot assignment and formation of user-centric cooperation clusters in the cell-less network, we use the algorithm described in \cite[Algorithm~4.1]{demir2021}. The main simulation parameters are listed in Table~\ref{tab:parameters}, and we refer the readers to \cite{ain2025optimal} for further details on the simulation.

\begin{table}[h!]
\centering
\renewcommand{\arraystretch}{1} 
\setlength{\tabcolsep}{10pt} 
\begin{tabular}{c c} 
\hline
\textbf{Parameter} & \textbf{Value} \\ [1ex]
\hline\hline
Network area & $1000$ × $1000$ $\text{m}^2$ \\
\hline
Number of APs & $L=16$\\
\hline
Number of users & $N=25$\\
\hline
Number of antennas per access point & $M=4$ \\
\hline
Bandwidth & $B=20$ MHz\\
\hline
Carrier frequency & $f_c=3.7$ GHz\\
\hline
Maximum uplink transmit power & $p_{\text{max}}= 200$ mW\\
($\|\cdot\|_\infty$ used as the  & \\ 
 monotone norm in \refeq{eq.original}) & \\
\hline
Coherence block symbols & $\tau_\mathrm{c}=200$\\
\hline
Number of pilot symbols & $\tau_\mathrm{p}=10$\\
\hline
Access point-user height difference & $\Delta h=11$ m\\
\hline
Shadow fading (Line-of-Sight) & $\sigma_{\text{sf}} = 8$ dB \\
\hline
Antenna spacing & $d=\lambda/2$\\
\hline

\end{tabular}\\
\vspace{3mm} 
\caption{Simulation parameters}
\label{tab:parameters}
\end{table}

The performance evaluation considers uniform weights in \refeq{eq.original} (i.e., $(\forall u\in\mathcal{U})~\alpha_u=1$) and centralized cell-less MMSE beamforming design with fixed (unitary) power weighting coefficients in the channel inversion stage \cite[Sect.~3-D]{miretti2022closed}. We estimate the ergodic rates by averaging across 500 Monte Carlo trials, each drawing an independent channel realization for a single, fixed network layout. In the proposed fixed point iterations, shown in Corollary~\ref{cor.fpi}, we use \eqref{eq.utility} as the utility function, which is called ``optimistic ergodic rate'' (OER) in the following (see Remark~\ref{remark.bound}). As a baseline, we compute the OER bound with the solution to the standard max-min problem under the UatF bound, which admits a pseudo-closed-form expression \cite[Proposition~4]{miretti2022closed}. This baseline mirrors established practice in the literature: if simple fixed point algorithms capable of solving \refeq{eq.original} were unknown (as was the case prior to this study, to the best of our knowledge), we would consider the surrogate max-min power-control problem that uses the UatF bound instead of the target bound in \refeq{eq.utility}, with the expectation that the surrogate solution would approximate the solution to \refeq{eq.original}. 

\begin{figure}[h!]
\centering
\includegraphics[width=0.48\textwidth]{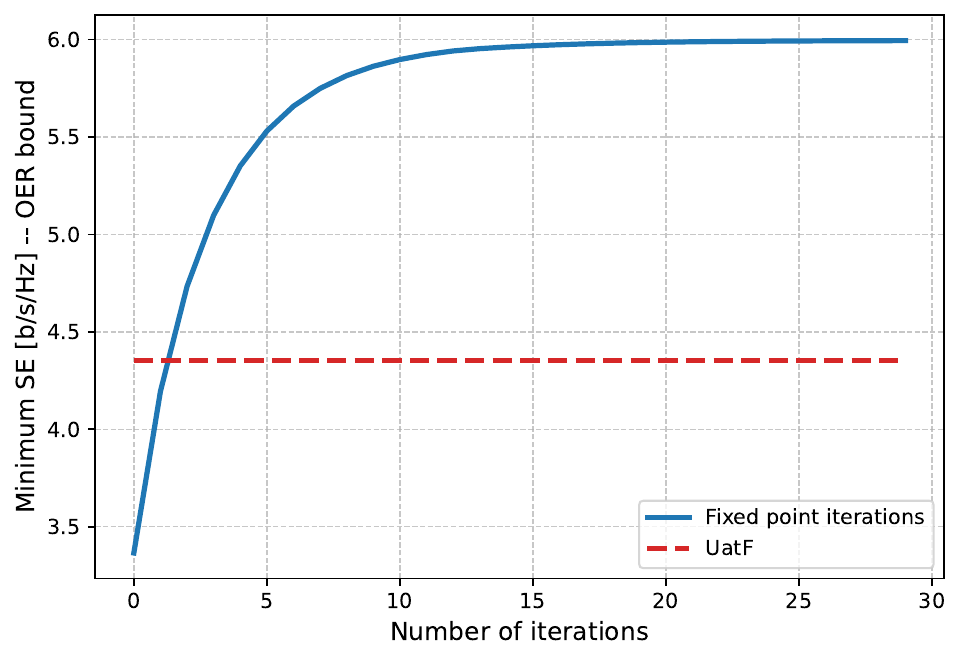}
\caption{ Convergence of the worst-user rate (OER bound) under centralized MMSE beamforming. For reference, the minimum OER bound obtained with the solution to the UatF max-min problem is included.}
\label{fig:convergence1}
\end{figure}

Fig.~\ref{fig:convergence1} shows the convergence of the objective value for the problem in \refeq{eq.original} using the fixed point algorithm in Corollary~\ref{cor.fpi}, and, for comparison, the objective value computed with the solution to the UatF max-min problem. We verify that the proposed fixed point algorithm shows negligible drift from finite-sample Monte Carlo estimation of expectations, indicating robustness to sampling noise. Furthermore, replacing the OER bound with the UatF surrogate in the max-min problem leads to a noticeable performance loss with respect to the OER bound. This result is hardly surprising: the proposed iterative algorithm provably converges to a global solution to \refeq{eq.original}, which uses the OER bound, the figure of merit considered in the simulations.

\section{Conclusion}
	
The UatF bound has been widely used in network optimization because dealing with alternative information-theoretic (ergodic) rate bounds were viewed as difficult. However, we have shown a framework able to target these alternative bounds directly, giving rise to simple fixed point algorithms that provably converge to global optima of max-min power control problems in cellular and cell-less massive MIMO systems, among others. Simulations show that these algorithms deliver gains in regimes where the UatF bound is overly conservative. Finally, we remark that the possible gains can be extreme in scenarios where statistical (nonadaptive) beamforming is used under zero-mean channels, in which case the UatF is inapplicable because it provides trivial (zero) rate bounds.

\bibliographystyle{IEEEtran}
\bibliography{IEEEabrv,references}

@STRING{IEEE_J_SP         = "{IEEE} Trans. Signal Processing"}

@STRING{IEEE_J_JSAC       = "{IEEE} J. Select. Areas Commun."}

@Article{cavalcante2016,
  Title                    = {Elementary Properties of Positive Concave Mappings with Applications to Network Planning and Optimization},
  Author                   = {Cavalcante, Renato L.~G.~ and Shen, Yuxiang and Sta{\'n}czak, Slawomir},
  Journal                  = IEEE_J_SP,
  Month                    = {April},
  Year					   = {2016},
  Pages                    = {1774-1873},
  Volume                   = {64},
  Number                   = {7}
}

@Article{yates95,
  Title                    = {A framework for uplink power control in cellular radio systems},
  Author                   = {R. D. Yates},
  Journal                  = IEEE_J_JSAC,
  Year                     = {1995},
  Month                    = {Sept.},
  Number                   = {7},
  Pages                    = {pp. 1341-1348},
  Volume                   = {13}
}

@Article{renato2016power,
	Title                    = {Low Complexity Iterative Algorithms for Power Estimation in Ultra-Dense Load Coupled Networks},
	Author  				 = {R.~L.~G.~Cavalcante and S.~Sta\'nczak and J. Zhang and H. Zhuang},
	Journal                  = IEEE_J_SP,
	month = {Nov.},
	pages = {6058-6070},
	year = {2016},
	number = {22},
	volume = {7},
}

@book{rock70,
	title={Convex analysis},
	author={R.~ Tyrrell~Rockafellar},
	year={1970},
	publisher={Princeton university press}
}

@inproceedings{nuzman07,
	title={Contraction approach to power control, with non-monotonic applications},
	author={Nuzman, Carl J},
	booktitle={IEEE GLOBECOM 2007-IEEE Global Telecommunications Conference},
	pages={5283--5287},
	year={2007},
	organization={IEEE}
}

@article{leung2004,
	title={Convergence theorem for a general class of power-control algorithms},
	author={Leung, Kin Kwong and Sung, Chi Wan and Wong, Wing Shing and Lok, Tat-Ming},
	journal={IEEE Transactions on Communications},
	volume={52},
	number={9},
	pages={1566--1574},
	year={2004},
	publisher={IEEE}
}

@inproceedings{cai2012optimal,
	title={Optimal max-min fairness rate control in wireless networks: {Perron-Frobenius} characterization and algorithms},
	author={Cai, Desmond WH and Tan, Chee Wei and Low, Steven H},
	booktitle={INFOCOM, 2012 Proceedings IEEE},
	pages={648--656},
	year={2012},
	organization={IEEE}
}

@book{marzetta16,
	title={Fundamentals of Massive {MIMO}},
	author={T.~L.~Marzetta and E.~G.~Larsson and H.~Yang and H.~Q.~Ngo},
	year={2016},
	publisher={Cambridge University Press}
}

@article{burbanks2003extension,
	title={Extension of order-preserving maps on a cone},
	author={Burbanks, Andrew D and Nussbaum, Roger D and Sparrow, Colin T},
	journal={Proceedings. Section A, Mathematics-The Royal Society of Edinburgh},
	volume={133},
	number={1},
	pages={35},
	year={2003},
	publisher={Cambridge University Press}
}

@article{demir2021,
	title={Foundations of user-centric cell-free massive {MIMO}},
	author={Demir, {\"O}.~T. and Bj{\"o}rnson, E. and Sanguinetti, L.},
	journal={Foundations and Trends in Signal Processing},
	volume = {14},
	pages = {162--472},
	year={2021}
}

@article{massivemimobook,
	year = {2017},
	volume = {11},
	journal = {Foundations and Trends{\textregistered} in Signal Processing},
	title = {Massive {MIMO} Networks: {Spectral}, Energy, and Hardware Efficiency},
	doi = {10.1561/2000000093},
	issn = {1932-8346},
	number = {3-4},
	pages = {154-655},
	author = {E. Bj\"{o}rnson and J. Hoydis and L. Sanguinetti}
}

@article{miretti2024ul,
	title={{UL-DL} duality for cell-free massive {MIMO} with {per-AP} power and information constraints},
	author={L.~Miretti and R.~L.~G.~Cavalcante and E.~Bj{\"o}rnson and S{\l}awomir Sta{\'n}czak},
	journal={IEEE Transactions on Signal Processing},
	year={2024},
	publisher={IEEE}
}

@Book{lem13,
	Title                    = {Nonlinear {Perron-Frobenius} theory},
	Author                   = {B. Lemmens and R. Nussbaum},
	Publisher                = {Cambridge University Press},
	Year                     = {2012},
	Address                  = {Cambridge, UK}
}

@inproceedings{cavalcante2023,
	title={Characterization of the weak {Pareto} boundary of resource allocation problems in wireless networks--Implications to cell-less systems},
	author={Renato L. G. Cavalcante and L. Miretti and S. Sta{\'n}czak},
	booktitle={IEEE International Conference on Communications (ICC)},
	pages={5010--5016},
	year={2023},
	organization={IEEE}
}

@article{miretti2025two,
	title={Two-timescale joint power control and beamforming design with applications to cell-free massive {MIMO}},
	author={Miretti, Lorenzo and Cavalcante, Renato L.~G. and Sta{\'n}czak, S{\l}awomir},
	journal={IEEE Transactions on Wireless Communications},
	year={2025},
	publisher={IEEE},
	note = {to appear}
}

@book{krause2015positive,
	title={Positive dynamical systems in discrete time: theory, models, and applications},
	author={Krause, Ulrich},
	year={2015},
	publisher={Walter de Gruyter GmbH \& Co KG}
}

@inproceedings{miretti2022closed,
	title={Closed-form max-min power control for some cellular and cell-free massive {MIMO} networks},
	author={Miretti, Lorenzo and Cavalcante, Renato LG and Sta{\'n}czak, S{\l}awomir and Schubert, Martin and B{\"o}hnke, Ronald and Xu, Wen},
	booktitle={2022 IEEE 95th Vehicular Technology Conference:(VTC2022-Spring)},
	pages={1--7},
	year={2022},
	organization={IEEE}
}

@article{chafaa2025,
	title={Transformer-Based Power Optimization for Max-Min Fairness in Cell-Free Massive {MIMO}},
	author={Chafaa, Irched and Bacci, Giacomo and Sanguinetti, Luca},
	journal={IEEE Wireless Communications Letters},
	year={2025},
	publisher={IEEE},
	note = {accepted for publication}
}

@article{miretti2025joint,
	title={A joint channel estimation and beamforming separation principle for massive {MIMO} systems},
	author={Miretti, Lorenzo and Caire, Giuseppe and Sta\'nczak, S\'lawomir},
	journal={arXiv preprint arXiv:2507.08947},
	year={2025}
}

@article{caire2018ergodic,
	title={On the ergodic rate lower bounds with applications to massive {MIMO}},
	author={Caire, Giuseppe},
	journal={IEEE Transactions on Wireless Communications},
	volume={17},
	number={5},
	pages={3258--3268},
	year={2018},
	publisher={IEEE}
}

@inproceedings{ain2025optimal,
	title={On the Optimal Performance of Distributed Cell-Free Massive MIMO with LoS Propagation},
	author={Ain, Noor Ul and Miretti, Lorenzo and Sta{\'n}czak, S{\l}awomir},
	booktitle={2025 IEEE Wireless Communications and Networking Conference (WCNC)},
	pages={1--7},
	year={2025},
	organization={IEEE}
}

@book{folland1999real,
	title={Real analysis: modern techniques and their applications},
	author={Folland, Gerald B},
	year={1999},
	publisher={John Wiley \& Sons},
  	Edition                  = {2nd}	
}

@ARTICLE{3gpp,
title={{3rd Generation Partnership Project, Technical Specification Group Radio Access Network; Spatial channel model for Multiple Input Multiple Output (MIMO) simulations}},
Journal={3GPP TR 25.996 Version 14.0.0},
year={2017}
}

\end{document}